\newtheorem{prop}{Proposition}{\bf}{}
{\bf}{}
{\bf}{}
{\bf}{}
{\bf}{}
{\bf}{}
\newtheorem{form}{Formulation}
\begin{document}

\begin{frontmatter}


\title{A branch-\&-price approach to the unrooted maximum agreement forest problem}



\author[a]{Martin Frohn\corref{1}\fnref{fn1}}

\author[a]{Steven Kelk\fnref{fn2}}

\author[a]{Simona Vychytilova\fnref{fn3}}

\address[a]{Department of Advanced Computing Sciences, Maastricht University, Paul Henri Spaaklaan 1, 6229 EN Maastricht, The Netherlands}

\cortext[1]{Corresponding author}
\fntext[fn1]{Email: \href{mailto:martin.frohn@maastrichtuniversity.nl}{martin.frohn@maastrichtuniversity.nl} (Martin Frohn)}
\fntext[fn2]{Email: \href{mailto:steven.kelk@maastrichtuniversity.nl}{steven.kelk@maastrichtuniversity.nl} (Steven Kelk)}
\fntext[fn3]{Email: \href{mailto:simivych@gmail.com}{simivych@gmail.com}
(Simona Vychytilova)}

\begin{abstract}
We propose the first branch-\&-price algorithm for the maximum agreement forest problem on
unrooted binary trees: given two unrooted $X$-labelled binary trees
we seek to partition $X$ into a minimum number of blocks such
that the induced subtrees are disjoint and have the same topologies in
both trees.
We provide a dynamic programming algorithm
for the weighted maximum agreement subtree problem to solve the pricing problem.
When combined with rigorous polynomial-time pre-processing
our branch-\&-price algorithm exhibits (beyond) state-of-the-art performance.
\end{abstract}

\begin{keyword}
branch-\&-price\sep dynamic programming\sep phylogenetics.
\end{keyword}

\end{frontmatter}

\section{Introduction}
Given a set of $n$ species $X=\{x_1,\dots,x_n\}$, a $X$-tree is a tree
$T=(V,E)$ whose $n$ leaves are bijectively labelled by $X$. We call $T$ \emph{rooted} if there exists a vertex $\rho\in V$ with in-degree $0$ and out-degree $2$ such that there exists a (unique) directed path from $\rho$ to every leaf of $T$. Otherwise, if $T$ is undirected we call it \emph{unrooted}. In this article we are mainly concerned with unrooted binary $X$-trees: each internal vertex has degree 3.

Such trees are a fundamental object of study in the field of phylogenetics, which seeks to reconstruct the evolutionary history of a set of contemporary species $X$ in the absence of information about extinct ancestors \cite{SempleSteel2003}. The literature on constructing these phylogenetic $X$-trees is vast. For a variety of methodological and biological reasons different inference methods can produce trees with different topologies, and then it is natural to wish to quantify in a formal sense how topologically different two trees are \cite{som2015causes}. Many such pairwise dissimilarity measures have been proposed. Here we focus on the well-studied \emph{maximum agreement forest} (MAF) model. Given two phylogenetic $X$-trees $T_1, T_2$ this model asks for $X$ to be partitioned into a minimum number of blocks such that in both $T_1$ and $T_2$ the subtrees induced by the blocks are disjoint, and that for each block the subtree induced in $T_1$ is isomorphic to that induced in $T_2$ (up to suppression of degree 2 nodes, and taking leaf-labels into account). Such a minimum-size partition is called a maximum agreement forest.
Notably, the MAF model is not just a dissimilarity measure: it turns out to fulfill an important role in the  detection of reticulate evolutionary phenomena, such as horizontal gene transfer, and the construction of phylogenetic networks (as opposed to trees) \cite{HusonRuppScornavacca10}. 

Arguably the most significant variation is the distinction between the version of the problem on rooted trees (rMAF) and the variant we study here, on unrooted trees (uMAF). Both variants are NP-hard and APX-hard, and this has sparked the development of constant-factor approximation algorithms, fixed-parameter tractable (FPT) algorithms and mathematical programming approaches; we refer
to \cite{WhiddenBZ13,shi2018parameterized,bulteau2019parameterized} for overviews of MAF variations and results.
This research effort has yielded highly efficient exact algorithms for solving rMAF in practice \cite{Whidden2014,yamada2019better,dempsey2024simple}. Such algorithms are typically FPT branching algorithms, strengthened with some basic data reduction rules and bounds.  The research effort around uMAF has taken a slightly different path. FPT branching algorithms are not quite so effective here: due to its somewhat different underlying combinatorial structure, the `natural' branching scheme for uMAF has a higher branching factor than rMAF. Instead, for solving the problem exactly there has been a stronger focus on developing data-reduction rules \cite{kelk2024deep}, and integer linear programming (ILP) \cite{van2022reflections}. 

Recently, Van Wersch et al. \cite{van2022reflections} undertook an experimental study for uMAF of the relative strengths and weaknesses of a FPT branching algorithm and a polynomial-size ILP formulation, both with and without aggressive data reduction. Notably, of the core dataset consisting of 735 pairs of trees, none of the tested methods could solve 38 of the tree pairs within 5 minutes, even after full data reduction. Here we advance the state-of-the-art in exact uMAF solving by applying branch-\&-price to an exponential size ILP formulation. This formulation was originally defined by Olver et al. as scaffolding for the design of a polynomial-time 2-approximation for rMAF \cite{olver2023duality}; they did not, however, attempt to solve the ILP. We observe that the exponential-size ILP maps easily to uMAF and describe a non-trivial dynamic programming algorithm to solve the pricing problem. Specifically, our algorithm solves a weighted version of the well-studied \emph{maximum agreement subtree problem} \cite{steel93} to which our pricing problem forms an instance.

In our experimental section we observe that after data reduction our algorithm can solve 731 of the 735 tree-pairs within 5 minutes (and the remaining 4 tree pairs in 5-7 minutes); the benchmark dataset has thus now been completely solved. The amount of branching is also strikingly low. Interestingly, the
state-of-the-art FPT branching algorithm (still) times out on 19 of the data-reduced 735 instances, and these 19 are all instances that the branch-\&-price algorithm can solve relatively easily even prior to data reduction. Our experimental section also explores a number of further properties of the branch-\&-price algorithm.

We conclude the article with a number of discussion points and open problems.

\section{Preliminaries}
Let $T_1, T_2$ be two unrooted (binary) phylogenetic $X$-trees. We write $T_1 = T_2$ if there is isomorphism between $T_1$ and $T_2$ that preserves the $X$ labels. A subset $X' \subseteq X$ naturally induces an embedding in an unrooted binary $X$-tree $T$:  the unique minimal subtree that connects all the labels in $X'$.  After suppressing degree-2 nodes in this embedding, an unrooted binary phylogenetic $X'$-tree is obtained, denoted $T|X'$. An agreement forest of $T_1, T_2$ is a partition of $X$ into blocks $\{ B_1, \ldots, B_k \}$ such that (1) in $T_1$ the embeddings of the blocks are mutually vertex disjoint, (2) in $T_2$ the embeddings of the blocks are mutually vertex disjoint and (3) for each $B_i$, $T_1|B_i = T_2|B_i$. The size of an agreement forest is simply the number of blocks in it, $k$. An agreement forest of minimum size is called a maximum agreement forest (uMAF) and we often overload the term uMAF to also refer to the computational problem of constructing such an optimal agreement forest. The definition of the problem on rooted trees, rMAF, differs slightly due to the use of rooted trees, but it is not necessary to define this version of the problem for this article.

\section{A branch-\&-price formulation for the unrooted maximum agreement forest problem}
In this section we outline our branch-\&-price approach to the uMAF problem. First, we give an ILP formulation of uMAF that is based on a formulation of rMAF by~\citet{olver2023duality}. Indeed, the authors' ILP formulation of rMAF is valid for uMAF when exchanging rooted for unrooted phylogenetic $X$-trees. Subsequently, we describe the master problem, the pricing problem and branching rules to define our branch-\&-price approach to uMAF. To this end, let $T_1$ and $T_2$ be two unrooted phylogenetic $X$-trees. Let
\begin{align*}
\mathcal{C}= \{ Y \subseteq X: T_1|Y  = T_2|Y \}
\end{align*}
denote the set of all blocks (i.e. subsets) $Y$ that induce the same topology subtree in $T_1$ and $T_2$. Note that any unrooted agreement forest of $T_1$ and $T_2$ can be formed with elements from $\mathcal{C}$. Therefore, we consider binary decision variables
\begin{align*}
a_Y=\begin{cases}1 &\text{if }Y\in\mathcal{C},\\
0 &\text{otherwise.}
\end{cases}
\end{align*}
For $i\in\{1,2\}$ and $Y\subseteq X$, let $V_i[Y]$ denote the internal vertices of the induced subtree of $T_i$ with leafset $Y$. In addition, let $V[Y] = V_1[Y] \cup V_2[Y]$. Then, following the rationale of~\citet{olver2023duality}, the following ILP is valid:

\begin{form}\label{form::uMAF}
\begin{align}
d_{MAF}(T_1, T_2)=~\min~\sum_{Y\in\mathcal{C}}a_Y~~~~~~~~~&\label{obj1}\\
\text{s.t.}~~~~\,\sum_{Y\in\mathcal{C}:\,x\in Y}a_Y&\geq 1~&~&\forall\,x\in X\label{con::cover}\\
\sum_{Y\in\mathcal{C}:\,v\in V[Y]}a_Y&\leq 1~&~&\forall\,v\in V[X]\label{con::packing}\\
a_Y&\in\{0,1\}~&~&\forall\,Y\in\mathcal{C}\nonumber
\end{align}
\end{form}
The constraints~\eqref{con::cover} ensure that each taxon in $X$ is covered by a block~$Y\in\mathcal{C}$. Similarly, we require the blocks $Y$ to be disjoint in both $T_1$ and $T_2$ by constraints~\eqref{con::packing}. Thus, the objective function value~\eqref{obj1} equals the size of an unrooted agreement forest of $T_1$ and $T_2$.

Formulation~\ref{form::uMAF} is well-suited to be the master problem for a branch-\&-price approach because it is constituted of an exponential number of variables and a polynomial number of constraints. The latter suggests that constraints of the following LP dual of Formulation~\ref{form::uMAF} can be separated efficiently:

\begin{form}\label{form::dual}
\begin{align}
\max~\sum_{x\in X}\alpha_x-\sum_{v\in V[X]}\beta_v&\nonumber\\
\text{s.t.}~~\sum_{x\in Y}\alpha_x-\sum_{v\in V[Y]}\beta_v&\leq 1~&~&\forall\,Y\in\mathcal{C}\label{con::dual}\\
\alpha_x&\geq 0~&~&\forall\,x\in X\nonumber\\
\beta_v&\geq 0~&~&\forall\,v\in V[X]\nonumber
\end{align}
\end{form}
In a branch-\&-price scheme we restrict the master problem to subsets of decision variables. This means, when solving the restricted master problem not all constraints~\eqref{con::dual} are enforced, i.e., some might be violated. Hence, we separate constraints~\eqref{con::dual} by identifying the most violated ones and adding the corresponding columns to the primal program/restricted master problem. Establishing dual feasibility in this manner forms the basis for solving the master problem. In other words, given fixed dual variable values, we seek to solve the pricing problem
\begin{align}\label{pricing}
\max_{Y\in\mathcal{C}}~\sum_{x\in Y}\alpha_x-\sum_{v\in V[Y]}\beta_v.
\end{align}
Next we show that problem~\eqref{pricing} can be viewed as an instance of the weighted maximum agreement subtree problem. This will enable us to develop an algorithm to solve~\eqref{pricing} in $\mathcal{O}(n^2)$ time and therefore generate columns for our master problem in polynomial-time.

\subsection{The weighted maximum agreement subtree problem}
Given two unrooted phylogenetic $X$-trees $T_1$ and $T_2$, weights $w(v)$, $v\in V[X]$, and $w(x)$, $x\in X$, the \emph{Weighted Maximum Agreement Subtree Problem (WMAST)} consists of maximizing
\begin{align*}
\text{WMAST}\left(T_1,T_2\right)=\sum_{v\in V[X']}w(v)+\sum_{x\in X'}w(x)
\end{align*}
ranging over all $X' \subseteq X$ such that $T_1|X' = T_2|X'$. Observe that problem~\eqref{pricing} is a WMAST when setting $w(x)=\alpha_x$, $x\in X$, and $w(v)=-\beta_v$, $v\in V[X]$.

For uniform weights $w(x)$, $x\in X$, and $w(v)=0$ for all $v\in V[X]$, the WMAST reduces to an unweighted problem which was introduced by~\citet{finden1} and solved by~\citet{steel93} in $\mathcal{O}(n^2)$ time with a dynamic programming algorithm, respectively. Notice that the latter result also holds for the version of the WMAST in which we require trees $T_1$ and $T_2$ to be rooted with leaf-sets $X_1$ and $X_2$, respectively. We denote this rooted version by rWMAST\footnote{The definition is the same as for WMAST, except that for $T_1|X' = T_2|X'$ to hold the direction of the arcs in $T_1, T_2$ also needs to be respected.}. In this section we extend the method of~\citet{steel93} to arbitrary but fixed weights.

First, for $i\in\{1,2\}$, observe that the removal of one edge~$e$ from the unrooted phylogenetic $X$-tree $T_i=(V_i,E_i)$ yields a rooted phylogenetic $X_i$-tree $t_i^e$ with $X=X_1\cup X_2$. Hence,
\begin{align}
&\text{WMAST}\left(T_1,T_2\right)\nonumber\\
&=\max_{e\in E_1,~f\in E_2}\left\{~\text{rWMAST}\left(t_1^e,t_1^f\right)+\text{rWMAST}\left(t_2^e,t_2^f\right),\right.\nonumber\\
&~~~~~~~~~~~~~~~~~~~~\,\left.\text{rWMAST}\left(t_1^e,t_2^f\right)+\text{rWMAST}\left(t_2^e,t_1^f\right)\right\}.\label{wmast_relation}
\end{align}
Therefore, we focus on solving rWMAST. To this end, let $t_1$ and $t_2$ be rooted phylogenetic trees with roots $\rho_1$ and $\rho_2$, internal vertices $V[t_1]$ and $V[t_2]$, and leaf sets $X[t_1]$ and $X[t_2]$, respectively. Let $w(v)$, $v\in V[t_i]\cup X[t_i]$, $i\in\{1,2\}$ be weights, and let $t_{ij}$, $i,j\in\{1,2\}$, denote the rooted phylogenetic trees we obtain by removing $\rho_i$ from $t_i$. Then, for $i\in\{1,2\}$, $x\in X[t_i]$, define $P_i(x)$ as the path from $\rho_i$ to $x$ in $t_i$ and consider the quantities $\mathcal{V}\left(t_1,t_2\right)$, $\mathcal{M}\left(t_1,t_2\right)$, $\mathcal{W}\left(t_1,t_2\right)$ defined as follows: if $|X[t_i]|\geq 2$, $i\in\{1,2\}$, then
\begin{align*}
&\mathcal{V}\left(t_1,t_2\right)=\max\limits_{j\in\{1,2\}}\left\{w(\rho_1)+w(\rho_2)+\mathcal{W}\left(t_1,t_2\right),\right.\\
&~~~~~~~~~~~~~~~~~~~~~~~~~~~\,\left.w(\rho_1)+\mathcal{V}\left(t_{1j},t_2\right),\,w(\rho_2)+\mathcal{V}\left(t_1,t_{2j}\right)\right\},\\
&\mathcal{M}\left(t_1,t_2\right)\\
&=\max\limits_{j\in\{1,2\}}\left\{w(\rho_1)+w(\rho_2)+\mathcal{W}\left(t_1,t_2\right),\,\mathcal{M}\left(t_1,t_{2j}\right),\,\mathcal{M}\left(t_{1j},t_2\right)\right\},\\
&\mathcal{W}\left(t_1,t_2\right)\\
&=\max\left\{\mathcal{V}\left(t_{11},t_{21}\right)+\mathcal{V}\left(t_{12},t_{22}\right),\,\mathcal{V}\left(t_{11},t_{22}\right)+\mathcal{V}\left(t_{12},t_{21}\right)\right\}.
\end{align*}
If $t_1$ or $t_2$ is a singleton with $X[t_1]\cap X[t_2]=\{x\}$, then $$\mathcal{V}\left(t_1,t_2\right)=\,\text{rWMAST}\left(P_1(x),P_2(x)\right)~~\text{and}~~\mathcal{M}\left(t_1,t_2\right)=w(x).$$ Otherwise, $\mathcal{V}\left(t_1,t_2\right)=\mathcal{M}\left(t_1,t_2\right)=0$. Observe that both functions $\mathcal{V}$ and $\mathcal{M}$ evaluate the weights of vertices which form an agreement forest for inputs $t_1$ and $t_2$. Then, we arrive at the following proposition: 

\begin{prop}\label{prop::rWMAST}
For $i\in\{1,2\}$, let $t_i$ be a rooted phylogenetic $X_i$-tree. Let $\rho$ be the root of the agreement subtree corresponding to $\mathcal{M}\left(t_1,t_2\right)$. Then, 
\begin{align*}
\text{rWMAST}\left(t_1,t_2\right)=\mathcal{M}\left(t_1,t_2\right)+\max\left\{0,\sum_{v\in P_1(\rho)\cup P_2(\rho)}w(v)\right\}.
\end{align*}
\end{prop}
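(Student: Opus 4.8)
The plan is to establish the identity by carefully relating the quantity $\mathcal{M}(t_1,t_2)$ — which, as observed just before the statement, evaluates the total weight of the \emph{internal-plus-leaf} vertices that form an agreement forest rooted somewhere strictly inside (or at the root of) $t_1$ and $t_2$ — to the full value $\text{rWMAST}(t_1,t_2)$. The key structural observation is a case split on the optimal agreement subtree $S$ achieving $\text{rWMAST}(t_1,t_2)$. Let $S$ have ``root'' $\rho$ (the vertex of $S$ closest to both $\rho_1$ and $\rho_2$, i.e. the top of the embeddings in $t_1$ and in $t_2$). Two regimes occur: either it is beneficial to route the optimal agreement subtree all the way up through the root paths $P_1(\rho)$ and $P_2(\rho)$ (collecting the weights $w(v)$ on those paths, which is profitable exactly when $\sum_{v\in P_1(\rho)\cup P_2(\rho)} w(v) \geq 0$), or it is not, in which case the optimal subtree is ``free-floating'' and its value is captured directly by $\mathcal{M}$. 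The $\max\{0,\cdot\}$ term is precisely the bookkeeping of this dichotomy.

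Concretely, I would first argue the inequality $\text{rWMAST}(t_1,t_2) \geq \mathcal{M}(t_1,t_2) + \max\{0,\sum_{v\in P_1(\rho)\cup P_2(\rho)}w(v)\}$: take the agreement forest realizing $\mathcal{M}(t_1,t_2)$ with root $\rho$; if the root-path sum is nonnegative, extend the embedding upward along $P_1(\rho)$ and $P_2(\rho)$ to obtain a genuine agreement subtree of $t_1,t_2$ whose weight is exactly $\mathcal{M}(t_1,t_2)+\sum_{v\in P_1(\rho)\cup P_2(\rho)}w(v)$; otherwise the forest itself already yields an agreement subtree (after adding the unique connecting structure, which contributes $0$ by the singleton base case) of weight $\mathcal{M}(t_1,t_2)$. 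For the reverse inequality, I would start from an optimal agreement subtree $S$ for $\text{rWMAST}(t_1,t_2)$, let $\rho'$ be its root; deleting the two root-to-$\rho'$ paths from $S$ yields a configuration counted by $\mathcal{M}(t_1,t_2)$ (here one uses that $\mathcal{M}$ maximizes over all admissible roots, so $\mathcal{M}(t_1,t_2) \geq $ the weight of $S$ below $\rho'$), and the weight lost is $\sum_{v\in P_1(\rho')\cup P_2(\rho')}w(v)$, which is at most $\max\{0,\sum_{v\in P_1(\rho)\cup P_2(\rho)}w(v)\}$ — this last step requires showing that the $\rho$ returned by the recursion for $\mathcal{M}$ is, among the candidate roots, the one maximizing the root-path sum, or at least that no other choice does better, which follows from how $\mathcal{M}$ descends (it never charges root weights, so the tie-break in reconstructing $\rho$ should be taken to maximize the root-path contribution).

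The induction itself would proceed on $|X[t_1]| + |X[t_2]|$, using the recurrences for $\mathcal{V}$, $\mathcal{M}$, $\mathcal{W}$: the base cases (a singleton tree, or disjoint leaf sets) are immediate from the definitions, since then $\mathcal{M}(t_1,t_2)=w(x)$ and the root paths $P_i(x)$ carry the remaining weight, matching $\text{rWMAST}(P_1(x),P_2(x))$ directly. In the inductive step one checks that the three-way $\max$ defining $\mathcal{M}$ correctly tracks ``descend on the $t_1$ side'', ``descend on the $t_2$ side'', or ``stop here and pair the roots'', and that $\mathcal{W}$ correctly accounts for the two ways of matching the children subtrees $t_{11},t_{12}$ against $t_{21},t_{22}$, in which case the value of $\mathcal{V}$ on the matched children already includes their own root-path contributions by the inductive hypothesis applied to $\mathcal{V}$.

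The main obstacle I anticipate is the interplay between $\mathcal{V}$ and $\mathcal{M}$: $\mathcal{V}$ already ``commits'' to running the agreement subtree up to the current root (it adds $w(\rho_1)+w(\rho_2)$ whenever it stops), whereas $\mathcal{M}$ deliberately postpones that decision. One must verify that the reconstructed root $\rho$ for $\mathcal{M}$ is consistent across the recursion and that the $\max\{0,\cdot\}$ correction is applied exactly once, at the very top, rather than being double-counted inside the $\mathcal{V}$-subproblems spawned by $\mathcal{W}$. Making this precise — essentially, proving that $\mathcal{V}(t_1,t_2)$ equals $\mathcal{M}(t_1,t_2)$ restricted to agreement subtrees whose root lies on the root-paths, \emph{plus} the weight of those root-paths down to the root — is the lemma doing the real work, and I would isolate it as an auxiliary claim before assembling the proposition.
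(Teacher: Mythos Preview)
Your proposal takes a genuinely different route from the paper. The paper argues by induction on the total number of leaves and splits into two cases according to which term attains the maximum in the definition of $\mathcal{M}(t_1,t_2)$. In Case~1 ($\mathcal{M}=w(\rho_1)+w(\rho_2)+\mathcal{W}$) it unwinds the $\mathcal{V}$--$\mathcal{W}$ recursion layer by layer until it reaches subtree pairs $(r_{i1},r_{i2})$ small enough that $\mathcal{V}(r_{i1},r_{i2})=\mathcal{M}(r_{i1},r_{i2})$, applies the induction hypothesis there, and then uses maximality of $\mathcal{M}$ to argue that any optimal agreement subtree must in fact pass through both actual roots $\rho_1,\rho_2$ (so the root-path correction is zero in this case). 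Case~2 is immediate from the induction hypothesis. Your plan, by contrast, is a two-sided inequality with a semantic auxiliary lemma characterising $\mathcal{V}$; this is arguably cleaner conceptually, because it separates ``what $\mathcal{V}$ computes'' from ``why the recursion is correct''.

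There is, however, a real gap in your reverse-inequality step that you flag but do not close. You need
\[
\sum_{v\in P_1(\rho')\cup P_2(\rho')} w(v) \;\le\; \max\Bigl\{0,\sum_{v\in P_1(\rho)\cup P_2(\rho)} w(v)\Bigr\},
\]
where $\rho'$ is the root of an optimal agreement subtree and $\rho$ is the root selected by the $\mathcal{M}$-recursion. Your suggested fix---choose the tie-break in $\mathcal{M}$ to maximise the root-path sum---does not work in general, because $\rho'$ need not even be a maximiser of $\mathcal{M}$: the weight of $S$ below $\rho'$ may be strictly smaller than $\mathcal{M}(t_1,t_2)$ while the root-path above $\rho'$ compensates. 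The paper sidesteps this completely: in its Case~1 it argues that optimality of $\mathcal{M}$ forces the optimal agreement subtree to be rooted at $(\rho_1,\rho_2)$ themselves, so the root-path comparison never arises. If you want to push your approach through, the auxiliary claim you need is not about tie-breaking but rather that $\mathcal{M}(t_1,t_2)+\max\{0,\text{root-path for }\rho\}\ge \mathcal{M}'+\max\{0,\text{root-path for }\rho'\}$ for every candidate root $\rho'$ with associated below-root weight $\mathcal{M}'$; this is essentially what your proposed lemma on $\mathcal{V}$ would have to deliver, and it is where the real work lies.
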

\begin{proof}
By induction on the number of leaves $l$ in $t_1$ and $t_2$. First, let $l=2$. Since both $t_1$ and $t_2$ are not empty, they are both singletons. Denote the singleton vertices in $t_1$ and $t_2$ by $x$ and $y$, respectively. Then, $\mathcal{M}(t_1,t_2)=w(x)$ if $x=y$ and $\mathcal{M}(t_1,t_2)=0$ otherwise. Hence, our claim holds. Assume our induction hypothesis holds for $l$ leaves and $t_1,t_2$ have $l+1$ leaves between them.
\begin{description}
\item[Case 1:] $\mathcal{M}\left(t_1,t_2\right)=w(\rho_1)+w(\rho_2)+\mathcal{W}\left(t_1,t_2\right)$. Let $t$ denote the rooted agreement subtree of $t_1$ and $t_2$ corresponding to $\mathcal{M}\left(t_1,t_2\right)$. Without loss of generality we have $\mathcal{W}\left(t_1,t_2\right)=\mathcal{V}\left(t_{11},t_{21}\right)+\mathcal{V}\left(t_{12},t_{22}\right)$. Otherwise exchange $t_{21}$ and $t_{22}$. Observe that the rooted agreement subtree $t$ arises from the concatenation of rooted agreement subtrees $s_1$ and $s_2$ of $t_{11}$ and $t_{21}$, and $t_{12}$ and $t_{22}$, respectively. For $j\in\{1,2\}$, by the definition of function $\mathcal{V}$, $s_j$ contains the root of $t_{1j}$ and $t_{2j}$, and is constructed from rooted agreement subtrees in $s_j$. Eventually this recursive process identifies subtrees $r_{1j},\dots,r_{pj}$, $j\in\{1,2\}$, for some integer $p\geq 1$ such that
\begin{align*}
\mathcal{V}\left(r_{i1},r_{i2}\right)&=\mathcal{M}\left(r_{i1},r_{i2}\right)~&~&\forall\,i\in\{1,\dots,p\}.
\end{align*}
The corresponding rooted agreement subtrees are maximum with respect to the given weights by our induction hypothesis, i.e.,
\begin{align*}
\text{rWMAST}\left(r_{i1},r_{i2}\right)&=\mathcal{V}\left(r_{i1},r_{i2}\right)~&~&\forall\,i\in\{1,\dots,p\}.
\end{align*}
This means, by the maximality of $\mathcal{V}$, subtree $t$ is maximum among rooted weighted agreement subtrees which contain roots $\rho_1$ and $\rho_2$. Since a maximum rooted agreement subtree of $t_1$ and $t_2$ has to contain roots $\rho_1$ and $\rho_2$ by the maximality of $\mathcal{M}$, we conclude that $t$ is a rooted weighted maximum agreement subtree of $t_1$ and~$t_2$.
\item[Case 2:] For $j\in\{1,2\}$, $\mathcal{M}\left(t_1,t_2\right)=\mathcal{M}\left(t_1,t_{2j}\right)$ or $\mathcal{M}\left(t_1,t_2\right)=\mathcal{M}\left(t_{1j},t_2\right)$. W.l.o.g. $\mathcal{M}\left(t_1,t_2\right)=\mathcal{M}\left(t_1,t_{21}\right)$. By our induction hypothesis, $\text{rWMAST}\left(t_1,t_{21}\right)=\mathcal{M}\left(t_1,t_{21}\right)$. Thus, our claim follows from the maximality of $\mathcal{M}(t_1,t_2)$.
\end{description}
\end{proof}

Observe that the maximization~\eqref{wmast_relation} is over $\mathcal{O}(n^2)$ terms. This means, we can define a partial order over these terms by inclusion and store its linear extension in $\mathcal{O}(n^2)$ space. Then, along this linear order we can evaluate the recursive functions $\mathcal{V}$, $\mathcal{M}$ and $\mathcal{W}$ iteratively. Thus, we conclude from Proposition~\ref{prop::rWMAST} that we can solve the WMAST in $\mathcal{O}(n^2)$ time. As we have observed before this means we can solve the pricing problem~\eqref{pricing} in $\mathcal{O}(n^2)$ time, too.

\begin{figure}[!t]
\centering
\includegraphics[scale=0.4]{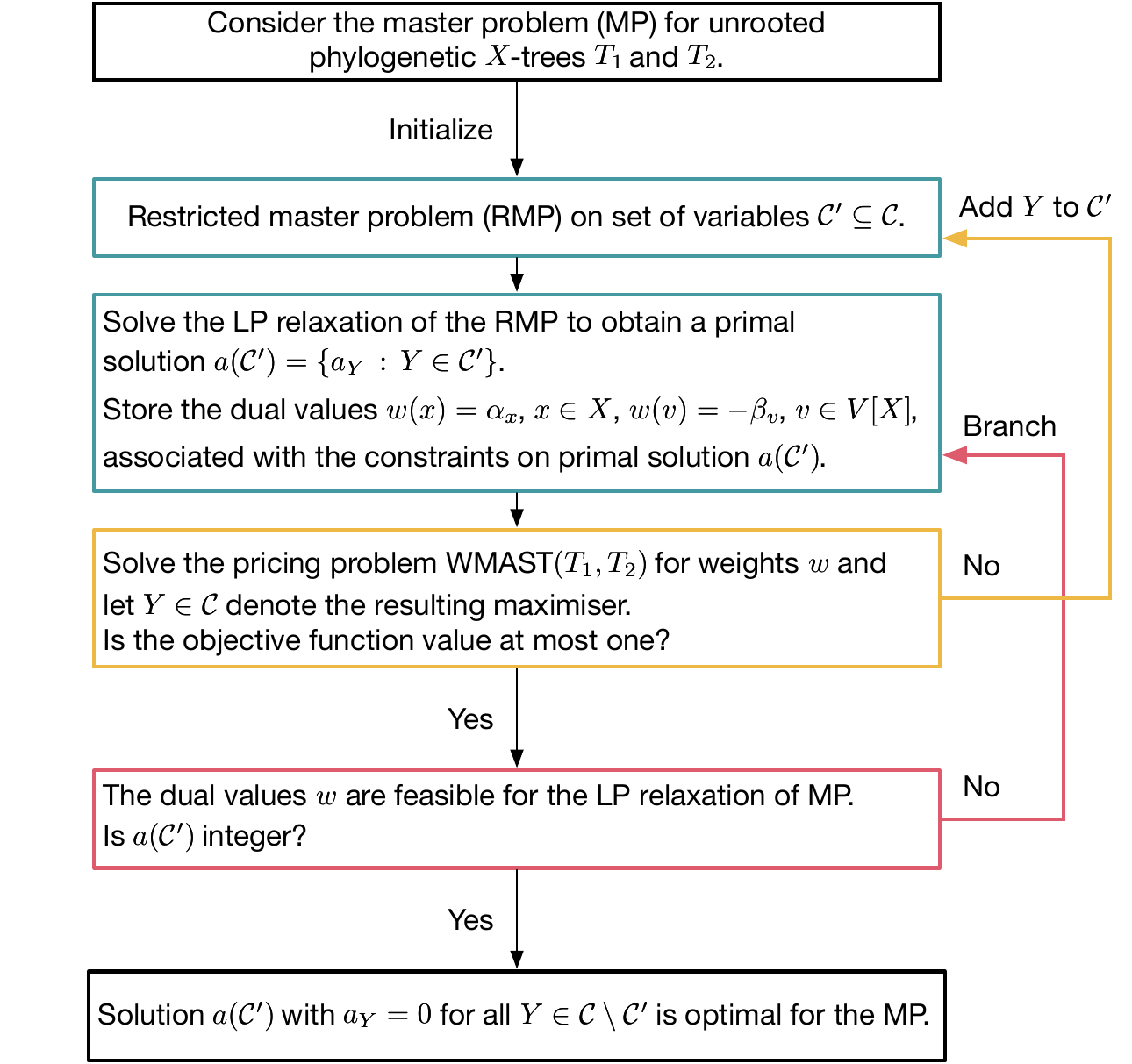}
\caption{A branch-\&-price scheme for uMAF. Intermediate solutions, the branching and pricing are shown in blue, red and yellow, respectively.}\label{fig::scheme}
\end{figure}

\subsection{Implementation of the branch-\&-price algorithm}
\label{subsec:implement}
We summarize our discussion of a branch-\&-price algorithm for the uMAF problem in Figure~\ref{fig::scheme}. While Formulation~\ref{form::uMAF} serves as the master problem, we initalize the restricted master problem by choosing $\mathcal{C}'$ to be the set of singleton leafsets in~$\mathcal{C}$. This allows us to solve the RMP in polynomial time and ensures that a feasible primal solution exists. Subsequently, we solve the pricing problem in quadratic time to generate columns for the RMP. Eventually, we reach dual feasibility (in the worst case for $\mathcal{C}'=\mathcal{C}$). If in addition our primal solution is integer, then we conclude that our optimal solution for the RMP is an optimal solution for the MP, too. Otherwise, we need to branch on variables $a(\mathcal{C}')$, i.e., enforce some of the integer constraints of the MP. We will consider two different implementation settings of our branch-\&-price algorithm by specifying two branching strategies: set $a_Y=1$ such that a selection criterion $S(Y)$ is maximum among all $Y\in\mathcal{C}'$. For our first strategy we choose $S(Y)=|Y|$ and for our second strategy we define $S(Y)=|Y|/|V[Y]|$\footnote{As discussed later in the experimental section, there was ultimately insufficient branching in the reference dataset to be able to say that one branching strategy is superior to the other, so the experiments simply used the second strategy.}.

Preliminary experiments have shown that a slight perturbation of solutions to the pricing problem improves (on average) the performance of our branch-\&-price implementations. Specifically, for some small $\epsilon>0$, we consider the following dual program instead of Formulation~\ref{form::dual}:
\begin{form}\label{form::dual2}
\begin{align}
\max~\sum_{x\in X}\alpha_x-\sum_{v\in V[X]}\beta_v&-\epsilon\cdot m_Y\nonumber\\
\text{s.t.}~~\sum_{x\in Y}\alpha_x-\sum_{v\in V[Y]}\beta_v&\leq 1~&~&\forall\,Y\in\mathcal{C}\nonumber\\
\alpha_x&= 1~&~&\forall\,x\in X\nonumber\\
m_Y&\geq \beta_v~&~&\forall\,v\in V[Y],\,Y\in\mathcal{C}\nonumber\\
\beta_v&\geq 0~&~&\forall\,v\in V[X]\nonumber
\end{align}
\end{form}
Then, the LP dual of Formulation~\ref{form::dual2} is as follows:
\begin{form}\label{form::uMAF2}
\begin{align}
\min~\sum_{Y\in\mathcal{C}}a_Y+\sum_{x\in X}b_x~~~~~~~&\label{obj2}\\
\text{s.t.}~~~~~~~~\,\sum_{Y\in\mathcal{C}:\,x\in Y}a_Y+b_x&\geq 1~&~&\forall\,x\in X\label{con::a-b}\\
\sum_{Y\in\mathcal{C}:\,v\in V[Y]}\left(a_Y+c_{v,Y}\right)&\leq 1~&~&\forall\,v\in V[X]\label{con::c1}\\
\sum_{v\in V[Y]}c_{v,Y}&\leq \epsilon~&~&\forall\,Y\in\mathcal{C}\label{con::c2}\\
a_Y&\geq 0~&~&\forall\,Y\in\mathcal{C}\nonumber\\
b_x&\in\mathbb{R}~&~&\forall\,x\in X\nonumber\\
c_{v,Y}&\geq 0~&~&\forall\,v\in V[Y],\,Y\in\mathcal{C}\nonumber
\end{align}
\end{form}
Observe that Formulation~\ref{form::uMAF2} reflects an extension of the LP relaxation of Formulation~\ref{form::uMAF} by variables $b_x$ and $c_{v,Y}$ but the change in constraints does not affect the optimal solution. To see this, first notice that constraints~\eqref{con::a-b} combined with objective function~\eqref{obj2} yield $b_x\in[0,1]$ because $a_Y\in[0,1]$. Hence, $b_x$ encodes $\{x\}\in\mathcal{C}$ in the same way $a_{\{x\}}$ does, leaving the optimum objective function value unchanged. Secondly, constraints~\eqref{con::c2} force variables $c_{v,Y}$ to take very small values, making it easy to recover integer values for $a_Y$ from the perturbations by $c_{v,Y}$ in constraints~\eqref{con::c1}. Thus, replacing pricing problem~\eqref{pricing} with
\begin{align}\label{pricing2}
\max_{Y\in\mathcal{C}}~|X|-\sum_{v\in V[Y]}\beta_v-\epsilon\cdot m_Y
\end{align}
lets our branch-\&-price algorithm terminate with an optimal solution to the uMAF problem. Throughout our computational experiments we keep pricing problem~\eqref{pricing2} instead of~\eqref{pricing}.

\section{Computational experiments}

The branch-\&-price algorithm itself was implemented in Java
using the jORlib package version 1.1.1 interface using CPLEX 22.1.1
as its ILP solver. For auxiliary data structure manipulation the JGraphT package version
0.9.0 was used. All experiments were conducted on a MacBook Air with the M3 chip, featuring an 8-core CPU and a 10-core GPU, 24GB of unified memory, and
running macOS Sonoma (version 14.4) using Darwin Kernel (version 23.4.0).
We have publicly released the implementation. The link to the GitHub page containing the branch-\&-price algorithm can be found at \url{https://github.com/simivych/MAF_B-P}.

\subsection{The reference dataset of Van Wersch et al.}

To test our implementation we re-analysed a reference dataset designed by \cite{van2022reflections}. This
consists of 735 pairs of unrooted binary phylogenetic $X$-trees. Briefly, it consists of 5  tree pairs for every $(t,s,k)$ combination, where
\begin{itemize}
\item $t \in \{50,100,150,200,250,300,350\}$,
\item $s \in \{50, 70,90\}$,
\item $k \in \{5,10,15,20,25,30,35\}$.
\end{itemize}
Each $(t,s,k)$ combination yields a tree pair as follows: a tree $T_1$ is randomly sampled on $t$ taxa, with ``skew'' $s$ (where $s=50$ means an almost balanced tree, and $s=90$ is an almost path-like tree), and then $k$ random tree bisection and reconnect (TBR) moves are applied to it to obtain a second tree $T_2$. A TBR move is a specific type of topological rearrangement of the tree: a central result in the literature states that the minimum number of TBR moves required to turn one tree into another, is equal to the size of an uMAF, minus 1 \cite{steel01}.
The application of these random TBR moves therefore 
ensures that an uMAF of $T_1$ and $T_2$ has \emph{at most} $k+1$ blocks. (It can be  strictly less than $k+1$ if some of the randomly applied TBR moves cancel each other out). According to experiments \cite{van2022reflections} the skew parameter $s$ has only a very limited influence on overall algorithmic performance, so we do not analyse this experimental parameter further here.

In \cite{van2022reflections}, these 735 tree pairs were used to compare the performance of two solution techniques, an FPT branching algorithm (by Whidden et al. \cite{whidden2018calculating}, leveraging an optimization from \cite{ChenFS15}) and a new polynomial-size ILP formulation; in both cases both with and without polynomial-time kernelization. Kernelization is an established pre-processing technique in parameterized complexity which reduces the size of the instances - in this case, the number of leaves - without changing the size of the optimum, or changing it in a predictable way \cite{kernelization2019}. The largest tree pairs after kernelization had 174 leaves, and the time to perform kernelization is negligible compared to solving time for uMAF. Due to the performance guarantee of the  pre-processing used, the kernelized instances have a number of leaves at most 11 times the true size of the uMAF; for many of the instances this factor is somewhat smaller. Roughly speaking, therefore, a $(t,s,k)$ tree pair
has $\Theta(k)$ leaves after kernelization. The FPT branching algorithm has a running time of $O( 3^{OPT}\cdot \text{poly}(|X|) )$, where OPT here refers to the size of an uMAF, and the polynomial-size ILP formulation has $O(|X|)$ binary variables and $O(|X|^4)$ constraints.

\subsection{Time to generate columns, and number of columns generated}
Table \ref{tab:taxa} is derived from the performance of the branch-\&-price algorithm on the non-kernelized instances. We focussed on non-kernelized instances here because the number of leaves in these instances can be experimentally controlled, in contrast to the kernelized instances where this is linearly dependent on the size of the uMAF. The table shows that the number of leaves in an instance has a large impact on the performance of the branch-\&-price algorithm, primarily because of the extra time required to generate each column (i.e. to solve the pricing problem; recall that this has $O(t^2)$ complexity);  it increases from 0.36s for $t=50$ to 16.57s for $t=200$. However, the increasing number of leaves also contributes to an increase in the number of additional columns required overall, both in absolute terms (32 for $t=50$, but $55$ for $t=200$) and also in relative terms (1.90 times the true size of the uMAF for $t=50$, compared to 2.84 for $t=2.84$).

\begin{table*}
\centering
    ~~~~~\begin{tabular}{|p{3cm}|p{3cm}|p{3cm}|p{3cm}|p{3cm}|p{0.1cm}}
     \cline{1-5}
        \centering Number of leaves in tree $t$ & \centering Average time per column (s) & \centering Average number of additional columns generated & \centering Average additional columns generated / size of uMAF & Number of instances that branched  \\ \cline{1-5}
        \centering 50 & \centering 0.36 & \centering 31.81 & \centering 1.90 &  \centering 3/105 &  \\  \cline{1-5}
        \centering 100 & \centering 2.34 & \centering 38.11 & \centering 1.97 & \centering 0/105 & \\  \cline{1-5}
        \centering 150 & \centering 6.85 & \centering 42.20 & \centering 2.13 & \centering 0/105 & \\  \cline{1-5}
        \centering 200 & \centering 16.57 & \centering 55.13 & \centering 2.84 & \centering 0/105 & \\  \cline{1-5}

    \end{tabular}
    \caption{Performance of the branch-\&-price algorithm on non-kernelized instances of the reference dataset with up to 200 leaves. The number of columns generated, does not include the $t$ trivial columns (corresponding to single leaves) included in the initial solution.}
    \label{tab:taxa}
\end{table*}
The compound effect of these phenomena made it difficult for the branch-\&-price algorithm to solve non-kernelized instances with more than 200  taxa within 5 minutes, which is why the table only considers $t \leq 200$. Indeed, up to 100 leaves the branch-\&-price algorithm solves all instances
well within 5 minutes, and solving time is on average around 100 seconds. However, between 100 and 200 taxa the solving time increases roughly linearly, and the failure rate also increases linearly, to the point that at 150 taxa approximately 50$\%$ of the instances failed, at 200 taxa around 90$\%$, and by 250 taxa the failure rate is 100$\%$. Overall, the branch-\&-price algorithm was capable of solving (only) 280 of the 735 instances in 5 minutes. 

These results indicate that when using the branch-\&-price algorithm it is potentially useful to apply polynomial-time kernelization (or other forms of fast pre-processing) to first reduce the number of leaves in the instance.  We confirm this hypothesis in the next section. In any case, we observed that even on non-kernelized instances, the branch-\&-price algorithm exhibited some `islands of superiority' compared to the algorithms tested in \cite{van2022reflections}. Specifically: branch-\&-price could solve \emph{all} non-kernelized instances with up to 100 taxa comparatively easily. In contrast, as the termination experiments in the next section indicate, the FPT branching algorithm has multiple failure instances within this region (even on kernelized instances).

We briefly return to the number of additional columns generated (beyond the initial $t$ singleton columns), and the ratio of this number to uMAF. As shown in Table \ref{tab:taxa} this ratio rises from 1.9 to 2.84. A secondary analysis (not shown here) shows that if the same 420 non-kernelized trees with $t \leq 200$ are instead organized by $k$, the ratio seems to approach 2 from above as $k$ increases, but this phenomenon requires further research.

\subsection{Termination} An interesting finding of \cite{van2022reflections} was that there were 38 instances which neither of the two tested algorithms could solve to optimality in 5 minutes, even on the reduced, kernelized instances.

We applied our branch-\&-price algorithm on the kernelized versions of the 735 tree pairs and found that within 5 minutes, 731 could be solved to optimality. The 4 instances that could not be solved to optimality in 5 minutes, were solved in 5-7 minutes. As a result of this, the reference dataset has now been completely solved. We have made this information (i.e. the size of the uMAF for every tree pair) available on our aforementioned GitHub page. 

To mitigate the fact that the laptop used in \cite{van2022reflections} is somewhat older, we also re-ran the FPT branching algorithm on our laptop on the kernelized instances. The FPT branching algorithm could not solve 19 instances within 5 minutes. Of these 19, 8 had still not terminated after 8 minutes, and the other 11 terminated in 5-7 minutes. (We chose not to re-run the polynomial-size ILP, as this is not capable of solving tree pairs with more than 70 leaves after kernelization, due to the ILP itself becoming enormous: recall that it has $O(|X|^4)$ constraints.) Although the sample of non-terminating instances is small, it is instructive to note that the 19 instances on which the FPT branching algorithm failed, all had $k \in \{30,35\}$ and prior to kernelization at most 50 leaves. This is consistent with the fact that by design the algorithm runs in time exponential in the size of the uMAF, and has only (low-order) polynomial dependency on the number of leaves. The 4 kernelized instances that the branch-\&-price algorithm failed on all had a large number of leaves, between 137 and 174, putting them in the top 5$\%$ in terms of number of leaves. (Prior to kernelization these trees all had $t \in \{300,350\}$ and $k \in \{30,35\}$). This corroborates the findings in the previous section i.e. that the performance of the branch-\&-price algorithm is heavily sensitive to the number of leaves in the input trees.

\subsection{Branching}
As can be seen in Table \ref{tab:taxa}, only 3 of the 420 non-kernelized instances with $t \leq 200$ branched. (One of these used 3 branching nodes in total, one used 5, and one used more than 5.) On the kernelized dataset, only 4 branched out of the entire 735 instances. This suggests that, at least on this dataset, the root LP-relaxation is extremely often integral, and the time spent on any branching is completely negligible compared to the time required to generate the columns. The solving power of CPLEX was thus ultimately unnecessary: any efficient LP solver with basic support for branching would also have worked well. Due to the lack of branching, we were not able to meaningfully explore the strengths and weaknesses of the two different branching strategies (see Section \ref{subsec:implement}). For this reason we only used the second strategy throughout our experiments, i.e.  the one that selects the column with the largest size-of-leafset-to-number-of-internal-nodes ratio.

\subsection{Solving time}
We do not provide detailed results but observe that, on the kernelized dataset, the average solve time for the branch-\&-price algorithm rises slowly from negligible (for $k \leq 10$) to around 120s (for $k=35$); recall that such instances have $\Theta(k)$ leaves. Regarding comparison to other methods: as noted earlier the solving time of the FPT branching algorithm is exponentially dependent on the size of the uMAF. When uMAF is not too large (no more than 15-18, say) the FPT branching algorithm tends to solve more quickly than the branch-\&-price algorithm. This is not unexpected: the FPT branching algorithm has a very lightweight design, compared to the overheads involved in the branch-\&-price algorithm (such as initiating an LP solver) and the time to generate columns. Nevertheless, as noted above, the branch-\&-price algorithm runs more quickly (and terminates more often) once high uMAF starts to cause exponential slowdown in the FPT branching algorithm.  

\section{Discussion and conclusion}
When kernelization is applied the branch-\&-price algorithm is capable of solving (almost) all instances in the reference data-set in 5 minutes. Interestingly, even without kernelization the algorithm outperforms a state-of-the-art FPT branching algorithm on several (for the FPT algorithm) challenging instances with at most 100 leaves. With kernelization the branch-\&-price algorithm terminates more often than the FPT algorithm, although the FPT algorithm tends to run more quickly than the branch-\&-price algorithm on easier instances. Clearly, the time to solve the pricing algorithm, which grows sharply as the number of leaves increases, is a bottleneck. We expect that speeding the pricing algorithm up will greatly improve performance given that the time spent on branching is negligible. Indeed, why is the amount of branching so low? Also, is there any mathematical significance to the insight that, very often, the number of additional columns generated is roughly twice the size of the uMAF; could there be a link with existing (polyhedral-based) approximation algorithms such as \cite{olver2023duality}?

Looking forward, there is now a need to generate a more challenging reference dataset. This will allow a deeper exploration of the phenomena observed in this article and constitute a basis for developing a more sophisticated branch-\&-price algorithm. It is particularly interesting to determine for which types of instances branching becomes a non-negligible part of the running time.

Finally, we note that our branch-\&-price approach could also be implemented for rMAF, since the ILP has the same overall structure. Interestingly,  our dynamic-programming algorithm for the pricing problem becomes simpler in the case of rooted trees because the maximation~\eqref{wmast_relation} can be discarded. Furthermore, the existence of fast solution algorithms for the unweighted rWMAST~\cite{cole00} suggests further improvements of the pricing procedure which are not applicable for unrooted trees. Given the bottleneck formed by solving the pricing problem, this could constitute an important speedup. It is natural to then explore whether a  branch-\&-price algorithm for rMAF can be competitive against the heavily optimized FPT branching algorithms that exist for this problem. Extensions to multiple-tree variants of uMAF and rMAF can also be investigated.

\section{Acknowledgements}
M. Frohn was supported by grant OCENW.M.21.306 from the Dutch
Research Council (NWO). 






\bibliographystyle{elsarticle-num-names}







\end{document}